\newtheorem{thm}{Theorem}
\newtheorem{rem}{Remark}
\def\pdf{\,p} 
\def\pmf{\,p} 
\def\pjf{\,p} 
\def\tp{q} 
\def\dop{\varsigma}
\title{{\small \copyright 2020. This manuscript version is made available under the CC-BY-NC-ND 4.0\\[-10pt] license http://creativecommons.org/licenses/by-nc-nd/4.0/} \\\vspace{2cm}~\\
Simultaneous Estimation of State and Packet-Loss Occurrences in Networked Control Systems} 
\author{A. Mohammadzadeh, 
B. Tavassoli,
B. Moaveni}
\date{}
\begin{document}
\maketitle

\begin{abstract}                          
Estimating the occurrence of packet losses in a networked control systems (NCS) can be used to improve the control performance and to detect failures or cyber-attacks. 
This study considers simultaneous estimation of the plant state and the packet loss occurrences at each time step. After formulation of the problem, two solutions are proposed. In the first one, an input-output representation of the NCS model is used to design a recursive filter for estimation of the packet loss occurrences. This estimation is then used for state estimation through Kalman filtering. In the second solution, a state space model of NCS is used to design an estimator for both the plant state and the packet loss occurrences which employs a Kalman filter. The effectiveness of the solutions is shown during an example and comparisons are made between the proposed solutions and another solution based on the interacting multiple model estimation method. 
\end{abstract}

\section{Introduction} \label{sec1}

The usage of communication networks for transfering data between sensors, plant, and controllers in networked control systems (NCSs) brings about several benefits such as reduction of wiring, flexibility, scalability and so on \cite{R:Zhang2017, 101}. 
However, these systems are also faced with communication effects such as data packet loss and delay. Packet loss occurrences can be represented as mode variables of the system that obey Markov chain models. The resulting system is a Markovian jump system (MJS) \cite{Costa2005,15,205}. 
It is much more complex to solve common control problems in the case of an NCS with a Markovian mode.
In many of the NCSs such as industrial control systems over wireless fieldbuses, packet loss is the primary issue. Awareness of the controller about occurrence of each packet loss is useful for detecting failures, cyber-attacks, or improving the control performance \cite{102}. While the networked controller is able to detect the loss of data packets that it should receive from sensors, it cannot directly detect loss of data packets that it sends to the actuators and an estimator of packet loss occurrence can be helpful in this regard. 

Due to the interaction between the state variables and the Markovian mode which stands for the packet loss occurrences, it is not easy to estimate the packet loss occurrences without having an estimation of the system's state. 
On the other hand, the ordinary Kalman filter for state estimation is not directly applicable to an MJS if the Markovian mode is not available. 
The existing methods for estimation of the MJS state mainly include the multiple model estimation techniques \cite{Li2005,Soken2020}. These methods are mainly applied to target tracking problems \cite{Li2005} while they also find other applications \cite{Shi2020}. Multiple model estimation can be extended to nonlinear MJSs \cite{Elenchezhiyan2015}.
When the mode variable is not a Markov chain, we may have to estimate the state and mode of a switching system \cite{Meyer2019,Domlan2007}.
Estimation of the mode variable is not an objective in multiple model estimation methods. Most of the other hybrid estimation approaches also focus only on state estimation. It is possible to design $H_{\infty}$ filters for estimating only the MJS state if the mode can be detected directly \cite{104} or without requiring information about mode and assuming it to be completely undetectable \cite{Souza2006a, Souza2006b} (at the cost of increased estimation error). 

It is also instructive to have a brief review of the extensive body of works on state estimation in the presence of packet losses. 
Modeling the successive packet losses as a Bernoulli process, Kalman filters are designed in \cite{R:Sinopoli2004,R:Mo2012,R:Liu2004} when the packet loss occurrences are detectable, and $H_{\infty}$ filters are designed in \cite{105,Li2010} when the packet loss occurrences are undetectable. Measurement quantization is additionally considered in \cite{106}. 
The conditions that ensure the stability of Kalman filter are studied for example in \cite{201,202}. While it is usual to drop the erroneous data packet due to noise, it is attempted to use the most of the noisy data in \cite{203}.

In this work, the focus is on estimation of the packet loss occurrences in an NCS which can be used for monitoring and control improvement as mentioned above.
Defining a Markovian mode variable which represents the occurrence of packet losses, the problem is formulated as an MJS filtering. The existing multiple model estimation methods can be applied for estimation of the mode using the model probabilities calculated in these methods, as it is done in the section for numerical example in this paper. However, these methods require to run multiple Kalman filters in parallel which can raise complexity issues when the number of mode values is large due to presence of several packet-based communication links in the NCS. 
The main contribution of this work is proposing two alternative methods in the form of algorithms without requiring multiple Kalman filters.
In the first method, an input-output representation of the NCS model is used to design a recursive estimator only for the mode variable. In this method, it is not necessary to run a Kalman filter in parallel with the mode estimator.
But, the estimated mode can be used by a Kalman filter for estimation of the state if required. 
In the second method, a state space model of the NCS is used for simultaneous estimation of the state and mode by designing an estimator which includes a single Kalman filter as a component. 
The effectiveness of both methods are verified and compared during an example. 

The paper is organized as follows. In Section 2, the required models for NCS with packet losses are obtained. The two proposed algorithms are developed in sections 3. The results are applied to the example problem in Section 4 and conclusions are made at the end of the paper.

{\it Notations}: The sets of real numbers and integer numbers are denoted by $\mathbb{R}$ and $\mathbb{Z}$ respectively. 
Given a continuous-valued random variable $v$ and a discrete-valued random variable $w$, the probability density function (PDF) for $v$ is denoted by $\pdf(v)$ and the probability that $w$ equals $\bar{w}$ is denoted as $\pmf(w=\bar{w})$. The joint probability distribution of $w$ and $v$ is defined as $\pjf(w=\bar{w},y)=\pdf(y|w=\bar{w})\pmf(w=\bar{w})$. The expected value of a random variable $v$ is denoted by $E[v]$. For a signal $y_k$ where $k$ is the time step, the history of signal is the sequence $Y_{k} = \{y_k: k\ge 0\}$. The Kronecker delta function is denoted by $\delta_{ij}$ for $i,j\in\mathbb{Z}$.

\section{Modeling} 
\label{se2}
Consider the following plant model in which $x_k\in \mathbb{R}^n$ stands for the state vector, $y_k\in \mathbb{R}^m$ denotes the observation vector, $w_k\in \mathbb{R}^n$ and $v_k\in \mathbb{R}^m$ are zero-mean white Gaussian uncorrelated random vectors with, $E[w_k w_i^T]=Q\delta_{ki},E[v_k v_i^T]=R\delta_{ki}$, $\hat{u}_k\in \mathbb{R}^r$ is the input signal and $A,B,C,D$ are system's matrices.
\begin{subequations}
\label{eq:1}
\begin{align}
x_{k+1} &=Ax_k+B\hat{u}_k+w_k \label{eq:1a} \\ 
y_k&=Cx_k+v_k  \label{eq:1b}
\end{align}
\end{subequations}

According to \cite{20} the system \eqref{eq:1} can be alternatively represented as 
\begin{align}
&y_k+\sum_{i=1}^{n}a_{i}y_{k-i}=\sum_{j=1}^{p}b_{j}\hat{u}_{k-j}+e_k+\sum_{m=1}^{h}c_{m}e_{k-m}\label{eq:2}
\end{align}
where ${e_k}$ is a linear combination of $v_k$ and $w_k$. Hence, $e_k$ is also a zero mean white Gaussian random vector such that $E[e_k e_i^T]$ has the form $\varLambda\delta_{ki}$.

By defining ${\dop}$ as the time shift operator such that
${\dop}^{-1}u_k=u_{k-1}$, the system equation \eqref{eq:2} is written as
\begin{align}
y_k=&-\widehat{A}({\dop}^{-1})y_k+\widehat{B}({\dop}^{-1})\hat{u}_k+\widehat{C}({\dop}^{-1})e_k\\
\widehat{A}({\dop}^{-1})&=a_{1}{\dop}^{-1}+\cdots+a_{n}{\dop}^{-n} \nonumber\\
\widehat{B}({\dop}^{-1})&=b_{1}{\dop}^{-1}+\cdots+b_{p}{\dop}^{-p} \nonumber\\
\widehat{C}({\dop}^{-1})&=1+c_{1}{\dop}^{-1}+\cdots+c_{h}{\dop}^{-h} \nonumber
\end{align}

\subsection{Packet losses}
For modeling packet losses in the input signal path, a new variable which shows the packet loss occurrence in $i$th input at the $k$th time step is defined as
\begin{equation}
\label{eq:alf}
\alpha_{i,k} = \begin{cases}0\qquad & \parbox{6cm}{if a packet loss occurs in the\\[-3pt] $i$th input link at the $k$th step}\\[6pt]
1 & \text{otherwise}\end{cases}
\end{equation}
and $\theta_{k}$ which is system's mode is defined as below where $r$ is the input vector dimensionality.
\begin{align}\label{amirr}
\theta_{k}=\begin{pmatrix}
\alpha_{1,k}\quad
\cdots\quad
\alpha_{r,k}
\end{pmatrix}^{T}
\end{align}

According to the above definition, $\theta_{k}$ belongs to a set of binary-valued vectors with $s=2^r$ elements denoted as $\Theta$. For simplicity, we represent this set as $\Theta=\{1..2^r\}$ by preserving the order of elements such that $\theta_k=1$ stands for $\alpha_{1,k}=\cdots=\alpha_{r,k}=0$ and $\theta_k=2^r$ stands for $\alpha_{1,k}=\cdots=\alpha_{r,k}=1$. 
Then, $\theta_{k}$ can be considered as a discrete-time Markov chain with transition probabilities 
\begin{equation}
\label{eq:tpr}
{\tp}_{ij}=\pmf(\theta_{k}=j \mid\theta_{k-1}=i)
\end{equation}

\begin{rem}
\label{remploss}
The transition probabilities ${\tp}_{ij}$ can be obtained empirically based on the above definition given a measured mode sequence. If the data packet transmissions in different communication links are not simultaneous or the communication mediums are separated, then $\alpha_{i,k}$ for $i\in\{1..r\}$ are independent binary-valued random variables. As a result, the transition probabilities in \eqref{eq:tpr} can be computed easily in terms of the indivdual packet loss probabilities that are much easier to be obtained empirically \cite{Parker2011}. It is also mentioned that, in some cases the packet loss probabilities are computable based on theoretical analysis of the underlying communication network \cite{DaSilva2019,Lee2002}.
\end{rem}

\subsubsection{Packet losses: zero strategy}
In the zero strategy \cite{Schenato2009}, the input will be replaced by zero if a packet loss occurs. By defining $\Gamma(\theta_{k})$ as
\begin{align} \label{eq:22}
\Gamma(\theta_{k})=\begin{pmatrix} 
\alpha_{1,k} &\cdots &0 \\
\vdots &\ddots &\vdots\\
0 &\cdots &\alpha_{r,k}
\end{pmatrix}
\end{align}
the lossy links at the input can be modeled as 
\begin{align}\label{anna1}
\hat{u}_k=\Gamma(\theta_{k}) u_k
\end{align}
where $u_k\in \mathbb{R}^r$ denotes the input signal sent via the communication link and $\hat{u}_k\in \mathbb{R}^r$ stands for input signal received at the actuator. 
Replacing $\hat{u}$ in the different representations of the plant model \eqref{eq:1} and \eqref{eq:2}, one can respectively obtain \eqref{statespace1} and \eqref{qw12} in the following.
\begin{subequations}
\label{statespace1}
\begin{align}
x_{k+1} &= Ax_k+B(\theta_k)u_k+w_k\\ 
y_k& = Cx_k+v_k \\
&B(\theta_k)=B\Gamma(\theta_k)\nonumber
\end{align}
\end{subequations}

\begin{align}
y_k+\sum_{i=1}^{n}a_{i}y_{k-i}=&\sum_{j=1}^pb_{j}\Gamma(\theta_{k-j})u_{k-j}+e_k 
+\sum_{m=1}^{h}c_{m}e_{k-m}\label{qw12}
\end{align}

The later equation can be written as
\begin{align}
y_k=&-\widehat{A}({\dop}^{-1})y_k+\widehat{B}({\dop}^{-1},\theta_{k-1},\theta_{k-2},\cdots,\theta_{k-p})u_k 
+\widehat{C}({\dop}^{-1})e_k\label{eq:packet1} \\
&\widehat{B}({\dop}^{-1},\theta_{k-1},\theta_{k-2},\cdots,\theta_{k-p})=\sum_{j=1}^pb_{j}\Gamma(\theta_{k-j}){\dop}^{-j} \nonumber
\end{align}

\subsubsection{Packet losses: hold strategy}
In the hold strategy \cite{Schenato2009}, the previous data will be used if a data packet is lost. The lossy link can be modeled as below instead of \eqref{anna1}.
\begin{align}\label{anna}
\hat{u}_k=\Gamma(\theta_{k})u_k+(I-\Gamma(\theta_{k}))\hat{u}_{k-1}\\\nonumber
\end{align}

To combine the above equation with the plant model \eqref{eq:1}, an augmented state vector is defined as
\begin{align}
\widehat{x}_k=\begin{pmatrix} 
x_k \\
\hat{u}_{k-1}
\end{pmatrix}
\end{align}

Then, the augmented plant model is obtained as
\begin{align}
\label{statespace2}
\widehat{x}_{k+1}&=A(\theta_k)\widehat{x}_k+B(\theta_k)u_k+\widehat{w}_k\\
y_k&=\begin{pmatrix} 
C \quad 0_{1\times r}
\end{pmatrix}\widehat{x}_k+v_k \nonumber\\
&A(\theta_k)=\begin{pmatrix} 
A&\quad\quad B(1-\Gamma(\theta_k)) \\
0&\quad\quad I-\Gamma(\theta_k)
\end{pmatrix} \nonumber\\
&B(\theta_k)=\begin{pmatrix} 
B\Gamma(\theta_k)\\
\Gamma(\theta_k)
\end{pmatrix} 
,\qquad \widehat{w}_k=\begin{pmatrix} 
w_k\\
0_{r\times 1}
\end{pmatrix} \nonumber
\end{align}

The input-output representation \eqref{eq:2} in combination with \eqref{anna} is also transformed to
\begin{align}
y_k+&\sum_{i=1}^na_{i}y_{k-i}=\sum_{j=1}^pb_{j}\Gamma(\theta_{k-j})u_{k-j}+ \nonumber\\
&\sum_{l=1}^pb_{l}\big(I-\Gamma(\theta_{k-l})\big)\hat{u}_{k-l-1}+e_k+\sum_{m=1}^he_{k-m}\nonumber
\end{align}

The above equation can be represented as
\begin{align}
y_k&=-\widehat{A}({\dop}^{-1})y_k+\widehat{B}({\dop}^{-1},\theta_{k-1},\theta_{k-2},\cdots,\theta_{k-p})u_k \nonumber\\ 
&~+\widehat{B}_{1}({\dop}^{-1},\theta_{k-1},\theta_{k-2},\cdots,\theta_{k-p})\hat{u}_k+C({\dop}^{-1})e_k\label{eq:packet2} \\
&\widehat{B}({\dop}^{-1},\theta_{k-1},\theta_{k-2},\cdots,\theta_{k-p})=\sum_{j=1}^pb_{j}\Gamma(\theta_{k-j}){\dop}^{-j} \nonumber\\
&\widehat{B}_{1}({\dop}^{-1},\theta_{k-1},\theta_{k-2},\cdots,\theta_{k-p})=\sum_{l=1}^pb_{j}(I-\Gamma(\theta_{k-l})){\dop}^{l-1} \nonumber
\end{align}

\section{Mode estimation} \label{sec3}
In this section, a recursive filter is designed to estimate the system mode $\theta_k$ from the output measurements $y_k$. For this purpose, a maximum likelihood estimation of $\theta_{k}$ is considered as 
\begin{align} 
\label{eq:123}
\widehat{\theta}_{k-1}=\text{argmax}_{j}~\pmf(\theta_{k-1}=j \mid Y_{k})
\end{align}

In the above equation, $\theta_{k-1}$ is estimated given $Y_k$, because $y(k)$ does not depend on $\theta_k$ according to \eqref{statespace1}. As usual, the mean value estimators and the estimation error covariances for the state variable $x$ in \eqref{statespace1} are defined as below.
\begin{subequations}
\label{eq:kd}
\begin{align}
\hat{x}_{k|k-1} &= E[x_k | Y_{k-1}] \\
\hat{x}_{k|k} &= E[x_k | Y_k] \\
P_{k|k-1} &= E[(x_k-\hat{x}_{k|k-1})(x_k-\hat{x}_{k|k-1})^T] \\
P_{k|k} &= E[(x_k-\hat{x}_{k|k})(x_k-\hat{x}_{k|k})^T]
\end{align}
\end{subequations}

Using the estimated mode $\widehat{\theta}_{k-1}$, the ordinary Kalman filter for time-varying systems given in the following can be used for state estimation by considering \eqref{statespace1} as a linear time varying system. 
\begin{subequations}
\label{eq:kf}
\begin{align}
\hat{x}_{k|k-1} &= A(\widehat{\theta}_{k-1}) \hat{x}_{k-1|k-1} + B(\widehat{\theta}_{k-1})u_{k-1} \label{eq:kf1}\\
P_{k|k-1} &= A(\widehat{\theta}_{k-1}) P_{k-1|k-1} A^T(\widehat{\theta}_{k-1}) + Q \label{eq:kf2}\\
\hat{x}_{k|k} &= \hat{x}_{k|k-1} + K_k (y_k - \hat{y}_k) \label{eq:kf3}\\
\hat{y}_k &= C \hat{x}_{k|k-1} \label{eq:kf4}\\
K_k &= P_{k|k-1} C^T (C P_{k|k-1} C^T + R)^{-1} \label{eq:kf5}\\
P_{k|k} &= P_{k|k-1} - K_k C P_{k|k-1} \label{eq:kf6}
\end{align}
\end{subequations}

The PDF of mode which is used for estimation in \eqref{eq:123} can be calculated recursively according to the following theorem.
\begin{thm}
The following recursive equation for $\pmf(\theta_{k-1}=j \mid Y_{k})$ holds.
\begin{align} \label{eq:491}
\pmf(\theta_{k-1}=j \mid Y_{k}) &= \frac{\pdf(y_k\mid \theta_{k-1}=j,Y_{k-1})E_{j,k}}{\sum_{h=1}^s \pdf(y_k \mid \theta_{k-1}=h,Y_{k-1}) E_{h,k}}\\ \nonumber
E_{h,k} &= \sum_{l=1}^s {\tp}_{lh} ~\pmf(\theta_{k-2}=l\mid Y_{k-1})\\\nonumber
\end{align} 
\end{thm}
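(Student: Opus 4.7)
The plan is to obtain the recursion by two applications of Bayes' rule together with a single marginalization over $\theta_{k-2}$ that invokes the Markov property of the mode chain. First I would split the conditioning set as $Y_k = (y_k, Y_{k-1})$ and apply Bayes' rule conditional on $Y_{k-1}$:
\begin{align*}
p(\theta_{k-1}=j \mid Y_k) = \frac{p(y_k \mid \theta_{k-1}=j, Y_{k-1})\, p(\theta_{k-1}=j \mid Y_{k-1})}{p(y_k \mid Y_{k-1})}.
\end{align*}
Expanding the denominator by the law of total probability over the $s$ values of $\theta_{k-1}$ produces precisely the normalizing sum in \eqref{eq:491}, provided the prior $p(\theta_{k-1}=j \mid Y_{k-1})$ can be identified with $E_{j,k}$.

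The second step carries out that identification. Marginalizing over $\theta_{k-2}$ and using the chain rule gives
\begin{align*}
p(\theta_{k-1}=j \mid Y_{k-1}) = \sum_{l=1}^s p(\theta_{k-1}=j \mid \theta_{k-2}=l, Y_{k-1})\, p(\theta_{k-2}=l \mid Y_{k-1}).
\end{align*}
The conditional inside the sum collapses to $\tp_{lj}$ by the Markov property of $\{\theta_k\}$ combined with the fact that the mode process is exogenous. Substituting yields $p(\theta_{k-1}=j \mid Y_{k-1}) = E_{j,k}$, which plugged into the first display produces \eqref{eq:491}. The recursion closes because $p(\theta_{k-2}=l \mid Y_{k-1})$ is the mode posterior produced at the previous iteration of the filter.

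The one delicate point that deserves care is the justification of the conditional independence $p(\theta_{k-1}=j \mid \theta_{k-2}=l, Y_{k-1}) = \tp_{lj}$. One has to argue that given $\theta_{k-2}$ the observation history $Y_{k-1}$ carries no residual information about $\theta_{k-1}$. This follows because under \eqref{statespace1} the samples $y_0,\ldots,y_{k-1}$ are a measurable function of $\theta_{k-2},\theta_{k-3},\ldots$ together with mutually independent Gaussian noises that are also independent of $\theta_{k-1}$, while the transition law of $\theta_{k-1}$ depends on the past mode history only through $\theta_{k-2}$. This is standard for jump-Markov systems, but since the entire recursion rests on it the proof should state it explicitly rather than leave it implicit. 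A minor additional housekeeping item is to note that the likelihood $p(y_k \mid \theta_{k-1}=j, Y_{k-1})$ is itself computable: under the Gaussian Kalman-filter approximation of \eqref{eq:kf}, it is normal with mean $C\hat{x}_{k|k-1}$ and covariance $C P_{k|k-1} C^T + R$ evaluated at the candidate mode $j$, which is what makes the recursion implementable.
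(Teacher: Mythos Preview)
Your proof is correct and follows essentially the same route as the paper's: Bayes' rule conditional on $Y_{k-1}$, marginalization over $\theta_{k-2}$, and the Markov property to reduce $p(\theta_{k-1}=j\mid\theta_{k-2}=l,Y_{k-1})$ to $\tp_{lj}$. Your explicit justification of that conditional-independence step (noting that $Y_{k-1}$ is a measurable function of $\theta_{k-2},\theta_{k-3},\dots$ and independent noises) is more careful than the paper's one-line appeal to the Markov property; the closing remark on computability of the likelihood is not part of the proof proper but belongs to the subsequent implementation discussion.
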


\begin{proof}
Using the Bayes theorem, one can write the following equation.
\begin{align}
\pmf(\theta_{k-1}=j\mid Y_{k})=\frac{\pdf(\theta_{k-1}=j,y_k\mid Y_{k-1})}{\pdf(y_k\mid Y_{k-1})} \label{eq:y}
\end{align}
The numerator of the right hand side in \eqref{eq:y} can be written as
\begin{align}\nonumber
\pdf&(\theta_{k-1}=j,y_k\mid Y_{k-1})\\\nonumber
=&\pdf(y_k\mid \theta_{k-1}=j,Y_{k-1})\pmf(\theta_{k-1}=j\mid Y_{k-1})\\ \nonumber
=&\pdf(y_k\mid \theta_{k-1}=j,Y_{k-1})\Big(\sum_{i=1}^s \pmf(\theta_{k-1}=j,\theta_{k-2}=i\mid Y_{k-1})\Big)\\ \nonumber
=&\pdf(y_k\mid \theta_{k-1}=j,Y_{k-1}) 
\Big(\sum_{i=1}^s \pmf(\theta_{k-1}=j\mid \theta_{k-2}=i,Y_{k-1})\pmf(\theta_{k-2}=i\mid Y_{k-1})\Big)\\ \nonumber
\end{align}
Also, the denominator of the right hand side in \eqref{eq:y} can be written as
\begin{align}\nonumber
\pdf&(y_k\mid Y_{k-1})=\sum_{h=1}^s \pdf(y_k,\theta_{k-1}=h\mid Y_{k-1})\\ \nonumber
=&\sum_{h=1}^s \pdf(y_k\mid Y_{k-1},\theta_{k-1}=h)\pmf(\theta_{k-1}=h\mid Y_{k-1})\\ \nonumber
=&\sum_{h=1}^s \pdf(y_k\mid Y_{k-1},\theta_{k-1}=h)\Big(\sum_{l=1}^s \pmf(\theta_{k-1}=h,\theta_{k-2}=l\mid Y_{k-1})\Big)\\ \nonumber
=&\sum_{h=1}^s \pdf(y_k\mid Y_{k-1},\theta_{k-1}=h)\Big(\sum_{l=1}^s \pmf(\theta_{k-1}=h\mid \theta_{k-2}=l,Y_{k-1})\times\\[-6pt]
\nonumber
&\hspace{6cm} \pmf(\theta_{k-2}=l\mid Y_{k-1})\Big)
\end{align}
Due to the Markovian property of $\theta_{k}$ we have $\pmf(\theta_{k-1}=h\mid \theta_{k-2}=l,Y_{k-1})={\tp}_{lh}$. 
Then, by replacing the calculated numerator and the denominator of \eqref{eq:y}, the equation \eqref{eq:491} is resulted.
\end{proof} 

In order to use the above theorem, it is first needed to compute $\pdf(y_k\mid \theta_{k-1}=j,Y_{k-1})$. Combining the equations in \eqref{statespace1}, one can write
\begin{align*}
y_k& = C[A(\theta_{k-1})x_{k-1}+B(\theta_{k-1})u_{k-1}+w_{k-1}]+v_k 
\end{align*}

If $\theta_{k-1}$ is given, the right hand side of the above equation is composed of some Gaussian random variables and $x_{k-1}$. However, $x_{k-1}$ is a resultant of several random variables since the initial time. Therefore, the probability distribution of $x_{k-1}$ should not be far from the Gaussian distribution due to the central limit theorem. Hence, we assume that the following equations hold.

\begin{subequations}
\begin{align} \label{eq:y(t)}
\pdf(y_k &\mid \theta_{k-1}=j,Y_{k-1}) = \frac{\exp^{-\frac{1}{2}(y_k-\widehat{y}_{j,k})\Sigma^{-1}_{j,k}(y_k-\widehat{y}_{j,k})^{T}}}{\sqrt{(2\pi)^{m} |\Sigma_{j,k}|\,}}\\ 
\widehat{y}_{j,k} &= E(y_{k} \mid \theta_{k-1}=j,Y_{k-1}) \label{eq:ey} \\
\Sigma_{j,k} &= E[(y_{k}-\widehat{y}_{j,k})(y_{k}-\widehat{y}_{j,k})^T \mid \theta_{k-1}=j,Y_{k-1}] \label{eq:Sy}
\end{align} 
\end{subequations}

In the following, two approaches are proposed for calculating $\widehat{y}_{j,k}$ and $\Sigma_{j,k}$ in the the above equations.

\subsection{First approximation method}
The first approach is based on an approximate method for calculating $\widehat{y}_{j,k}$. The idea is to use the recursive equations of the system with packet losses. These equations are \eqref{eq:packet1} for the zero strategy and \eqref{eq:packet2} for the hold strategy. The expectation operation in \eqref{eq:ey} eliminates the noise terms and by replacing $\theta_{k-2},\cdots,\theta_{k-p}$ with their estimated values $\widehat{\theta}_{k-2},\cdots,\widehat{\theta}_{k-p}$ we have

\begin{subequations}
\begin{align} 
\widehat{y}_{j,k}&=\begin{cases} -\widehat{A}({\dop}^{-1})y_k+\widehat{B}({\dop}^{-1},j,\widehat{\theta}_{k-2},\cdots,\widehat{\theta}_{k-p})u_k & \text{zero strategy} \\[6pt] -\widehat{A}({\dop}^{-1})y_k+\widehat{B}({\dop}^{-1},j,\widehat{\theta}_{k-2},\cdots,\widehat{\theta}_{k-p})u_k+& \\[0pt] \hfil \widehat{B}_{1}({\dop}^{-1},j,\widehat{\theta}_{k-2},\cdots,\widehat{\theta}_{k-p})\hat{\hat{u}}_k & \text{hold strategy}\end{cases} \label{eq:ey1}\\[6pt]
\hat{\hat{u}}_{k}&=\Gamma(\widehat{\theta}_{k})u_{k}+\big(I-\Gamma(\widehat{\theta}_{k})\big)\hat{\hat{u}}_{k-1} \label{eq:ey1b}
\end{align}
\end{subequations}

The covariance matrix $\Sigma_{j,k}$ in \eqref{eq:Sy} can be also estimated by considering the independence of the noise terms at different time steps in \eqref{eq:packet1} and \eqref{eq:packet2} as 

\begin{equation}
\Sigma_{j,k} = (1+c_{1}^{2}+c_{2}^{2}+\cdots+c_{n}^{2})\varLambda \label{eq:Sy1}
\end{equation}

In the above equation $\varLambda$ is the covariance of the noise $e_k$ in equation \eqref{eq:2} and $c_{1},c_{2},\cdots,c_{n}$ are coefficients of the noise terms in that equation. 

The procedure for simultaneous estimation of mode and state based on the first approximation method can be represented as the following algorithm.

\vspace{6pt}
\noindent
{\bf Algorithm 1:} \\
{\bf Input}: The system model in\eqref{eq:packet1} for the zero strategy and \eqref{eq:packet2} for the \\[-2pt]hold strategy, the input $u_k$ at the $k$th step, the noise covariance matrices \\[-2pt]$R$ and $Q$, and the transition probabilities $\tp_{ij}$ defined in \eqref{eq:tpr}. \\
{\bf Initialization}: $\widehat{x}(0\mid 0)$, $P(0\mid 0)$, and $\pmf(\theta_{-1}=j\mid Y_{0})$ for $1\le j\le s$. \\
{\bf for} every time step $k$ {\bf do} \\ 
1. Calculate $\widehat{y}_{j,k}$ from \eqref{eq:ey1}. \\
2. Calculate $\Sigma_{j,k}$ from \eqref{eq:Sy1}. \\
3. Obtain $\pdf(y_k\mid\theta_{k-1}=j,Y_{k-1})$ from \eqref{eq:y(t)}. \\
4. Obtain $\pmf(\theta_{k-1}=j \mid Y_{k})$ from \eqref{eq:491}. \\
5. Obtain the mode estimation $\widehat{\theta}_{k-1}$ using \eqref{eq:123}. \\
6. Obtain the state estimation $\widehat{x}_{k\mid k}$ using the Kalman filter equations \\
\phantom{\hspace{3mm}} in \eqref{eq:kf} with $\theta_{k-1}$ set to $\widehat{\theta}_{k-1}$. \\
{\bf end}  

In the above algorithm, it is possible to estimate only the mode $\theta_{k-1}$ (without estimating the state $x_k$). For this purpose, it is only needed to eliminate the step 6 from the above algorithm.

\subsection{Second approximation method} 
In this part, $\widehat{y}_{j,k}$ in \eqref{eq:ey} is estimated using the state estimation obtained from the Kalman filter \eqref{eq:kf} as bellow
\begin{align}
\widehat{y}_{j,k} &= E(y_{k} \mid \theta_{k-1}=j,Y_{k-1}) \nonumber\\
&= E(Cx_k+v_k \mid \theta_{k-1}=j,Y_{k-1}) \nonumber\\
&= C E(A(\theta_{k-1})x_{k-1}+B(\theta_{k-1})u_{k-1}+w_{k-1} \mid \theta_{k-1}=j,Y_{k-1}) \nonumber\\
&= C A(j) E(x_{k-1} \mid Y_{k-1})+CB(j)u_{k-1} \quad\implies \nonumber\\
\widehat{y}_{j,k} &= C A(j) \hat{x}_{k-1|k-1}+CB(j)u_{k-1} \label{eq:ey2}
\end{align}

To compute the covariance matrix $\Sigma_{j,k}$ in \eqref{eq:Sy}, we first use \eqref{statespace1} to write the following equations given that $\theta_{k-1}=j$.
\begin{align*}
y_{k}-\widehat{y}_{j,k} =& C[A(j)x_{k-1}+B(j)u_{k-1}+w_{k-1}]+v_k \\[-4pt] 
&- (C A(j) \hat{x}_{k-1|k-1}+CB(j)u_{k-1}) \\
=&C A(j) (x_{k-1}-\hat{x}_{k-1|k-1})+C w_{k-1}+v_k \\
\end{align*}

Then, we can use \eqref{eq:kd} to write
\begin{align}
\Sigma_{j,k} &= E[(y_{k}-\widehat{y}_{j,k})(y_{k}-\widehat{y}_{j,k})^T \mid \theta_{k-1}=j,Y_{k-1}] \nonumber \\
&=C A(j) P_{k-1|k-1} A(j)^T C^T + C Q C^T + R \label{eq:Sy2}
\end{align}

With the above equations for $\widehat{y}_{j,k}$ and $\Sigma_{j,k}$, the Algorithm~1 can be modified as the following.

\vspace{6pt}
\noindent
{\bf Algorithm 2:} \\
This algorithm is the same as Algorithm 1, \\[-2pt]
except for steps 1 and 2 that are replaced by:\\
1. Calculate $\widehat{y}_{j,k}$ from \eqref{eq:ey2}. \\
2. Calculate $\Sigma_{j,k}$ from \eqref{eq:Sy2}. \\

Using the above algorithm, the mode $\theta_{k-1}$ and state $x_k$ must be estimated together and it is no longer possible to estimate the mode alone.

\begin{rem}
Algorithm 2 can be easily extended to the case in which the matrix $C$ in \eqref{eq:1b} depends on time $k$. For this purpose, it is only needed to replace $C$ by $C_k$ in \eqref{eq:kf}, \eqref{eq:ey2}, and \eqref{eq:Sy2}. This extension is useful when there are packet losses in the feedback path from the sensors to the controller. In this case, the matrix $C$ will depend on a new mode variable which is directly detectable and establishes a relationship between $y_k$ and the sample received by controller in the same way that $\theta_k$ establishes a relationship between $u_k$ and $\hat{u}_k$ in \eqref{anna}.	
\end{rem}

\begin{rem}
\label{ccmm}
The algorithms 1 and 2 have lower computational complexities compared with the multiple model estimation algorithms \cite{Li2005}. The reason is that the multiple model estimation algorithms generally need to run multiple Kalman filters in parallel. But, Algorithm~1 does not need a Kalman filter estimating only the mode (as explained after the algorithm) and the Algorithm~2 needs only a single Kalman filter. Excluding the Kalman filters, the remaining parts of the Algorithms 1, Algorithm 2, and the multiple model estimation algorithms have nearly the same computational loads that are less than the computational load of Kalman filtering.
\end{rem}

\section{Numerical example} \label{sec4}
In this section, the continuous stirred tank reactor (CSTR) process which is modeled in \cite[the 5th working point]{22} is considered for applying the results. Time discretization of the CSTR model with a sampling period of 0.25 sec results in the following state space equations.
\begin{align*}
x_{k+1}=&A_p x_k+B_p u_k\\ 
y_k=& x_k+v_k\\
A_p=& \begin{pmatrix} 
-0.8882&-0.0097\\
293.8556&2.2973\\
\end{pmatrix} , \quad
B_p = \begin{pmatrix} 
0.011&-0.0014\\
-0.3602&0.4732\\ 
\end{pmatrix} 
\end{align*}

The covariance matrix of the measurement noise $v_k$ is assumed to be equal to $R = 2.5\times 10^{-3} I$. 
The input-output representation of the system's model in \eqref{eq:2} can be also obtained as
\begin{align*}
y_k& = 1.4091y_{k-1}-0.8099y_{k-2}+\\
&~b_1 u_{k-1}+b_2 u_{t-2}+e_{k}-1.4091e_{k-1}+0.8099e_{k-2}\\
b_1 &= \begin{pmatrix}
 0.011&-0.0014\\
-0.3602&0.4732\\
\end{pmatrix} , \quad
b_2 = \begin{pmatrix}
-0.0218&-0.0014\\
2.9125&0.0089\\
\end{pmatrix}
\end{align*}
with $e_k = v_k$ which gives $\varLambda = E[e_k e_k^T] = R$. 

By using the hold strategy, the state-space representation of the system is in the form of \eqref{statespace2} and its equivalent input-output representation in \eqref{eq:packet2} can be obtained easily.

The above system model has $r=2$ and $\theta_k=(\alpha_{1,k}~~\alpha_{2,k})^T$ in \eqref{amirr} takes values from the set of four elements $\{(1~~1)^T$, $(1~~0)^T$, $(0~~1)^T$, $(0~~0)^T\}$ for different values of $\alpha_{1,k}$ and $\alpha_{2,k}$ defined in \eqref{eq:alf}.
It is assumed that $\alpha_{1,k}$ and $\alpha_{2,k}$ are independent binary-valued Markov chains with the following transition probability matrix (see Remark \ref{remploss}).
\begin{align*}
\begin{pmatrix} 
\pmf(\alpha_{i,k}=0|\alpha_{i,k-1}=0)\quad \pmf(\alpha_{i,k}=1|\alpha_{i,k-1}=0)\\
\pmf(\alpha_{i,k}=0|\alpha_{i,k-1}=1)\quad \pmf(\alpha_{i,k}=1|\alpha_{i,k-1}=1)\\
\end{pmatrix}
=
\begin{pmatrix} 
0.8\quad0.2\\
0.4\quad0.6\\
\end{pmatrix}
\end{align*} 

The independence of $\alpha_{1,k}$ and $\alpha_{2,k}$ can be used to write the following equation for calculating the transition probabilities of $\theta_{k}$.
\begin{align}
&\pmf(\theta_{k}=[i~~j] \mid\theta_{k-1}=[m~~n]) = \nonumber\\
&\quad\pmf(\alpha_{1,k}=i,\alpha_{2,k}=j\mid \alpha_{1,k-1}=m,\alpha_{1,k-1}=n )=\nonumber\\
&\quad \pmf(\alpha_{1,k}=i\mid \alpha_{1,k-1}=m )\pmf(\alpha_{2,k}=j\mid \alpha_{1,k-1}=n ) \label{eq:ex1}
\end{align}

According to the explanations underneath the Equation \eqref{amirr}, the set of values for $\theta_k$ is represented as $\{1,2,3,4\}$ for simplicity. More precisely, the mode $\theta_k$ is interpreted according to the Table~\ref{table2}.

\begin{table}[!b]
\small
\caption{Interpretation of the mode $\theta_k$ in terms of the packet loss occurrences for the example system.}
\label{table2}
\begin{center}
\begin{tabular}{|c|c|c|}
\cline{2-3}
\multicolumn{1}{c|}{} & \multicolumn{2}{|c|}{Packet loss occurrence} \\
\hline
Mode & ~first input~ & second input \\ 
\hline
1 & delivery & delivery \\ 
\hline
2 & delivery & loss \\
\hline
3 & loss & delivery \\
\hline
4 & loss & loss \\
\hline
\end{tabular}
\end{center}
\end{table} 

Then, \eqref{eq:ex1} can be used to obtain the transition probability matrix for $\theta_{k}$ with entries in \eqref{eq:tpr} as
 \begin{align*}
\begin{pmatrix} 
{\tp}_{11}&{\tp}_{12}&{\tp}_{13}&{\tp}_{14}\\
{\tp}_{21}&{\tp}_{22}&{\tp}_{23}&{\tp}_{24}\\
{\tp}_{31}&{\tp}_{32}&{\tp}_{33}&{\tp}_{34}\\
{\tp}_{41}&{\tp}_{42}&{\tp}_{43}&{\tp}_{44}\\
\end{pmatrix}
=
\begin{pmatrix} 
0.64&0.16&0.16&0.04\\
0.32&0.48&0.08&0.12\\
0.32&0.08&0.48&0.12\\
0.16&0.24&0.24&0.36\\
\end{pmatrix}.
\end{align*} 

Each element of the input $u_k$ is assumed to be a zero mean white noise with a standard deviation of 10. 
The initial state is set as $x_0=(1~~1~~1~~1)^T$.
The Kalman filter and mode estimation algorithms are also initialized as 
\begin{align*}
&\widehat{x}_{0\mid 0}=(0\quad 0 \quad 0 \quad 0)^{T},&& P_{0\mid 0} = 0.1 I_{4\times 4}\\
& \pmf(\theta_{-1}=i\mid Y_{0})=0.25 && i\in\{1..4\}.
\end{align*}

The above information provides the required data for applying the algorithms 1 and 2 to the CSTR example. 

The simulation results for applying the Algorithm 1 over 100 simulation steps are presented in Fig.~\ref{fig1}. The actual mode and its estimated value are shown in Fig.~\ref{fig1a}. 
The two plots coincide except at a few time steps at which the incorrectly estimated mode is marked by a {~\tiny\XSolidBold~} sign. The remaining subfigures in Fig. \ref{fig1} show the state variables, and the state estimation error in the Kalman filter. 
The simulation results for applying the Algorithm 2 are also presented in Fig. \ref{fig2} which shows the same set of information with the same format. 

\begin{figure}
\centering
\begin{subfigure}[b]{\textwidth}
\centering
\hfill\includegraphics[width=0.98\textwidth]{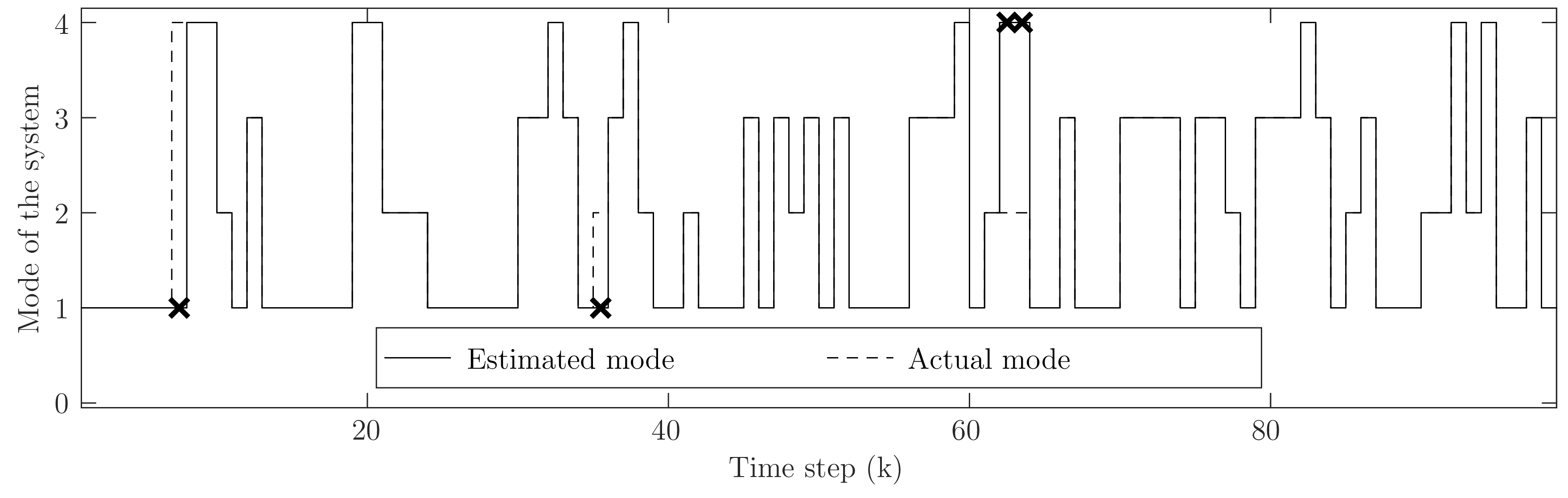}
\caption{The actual mode of system and its estimation using Algorithm 1.}
\label{fig1a}
\end{subfigure}
\par\vspace{12pt}
\begin{subfigure}[b]{2.2in}
\centering
\includegraphics[width=2.2in]{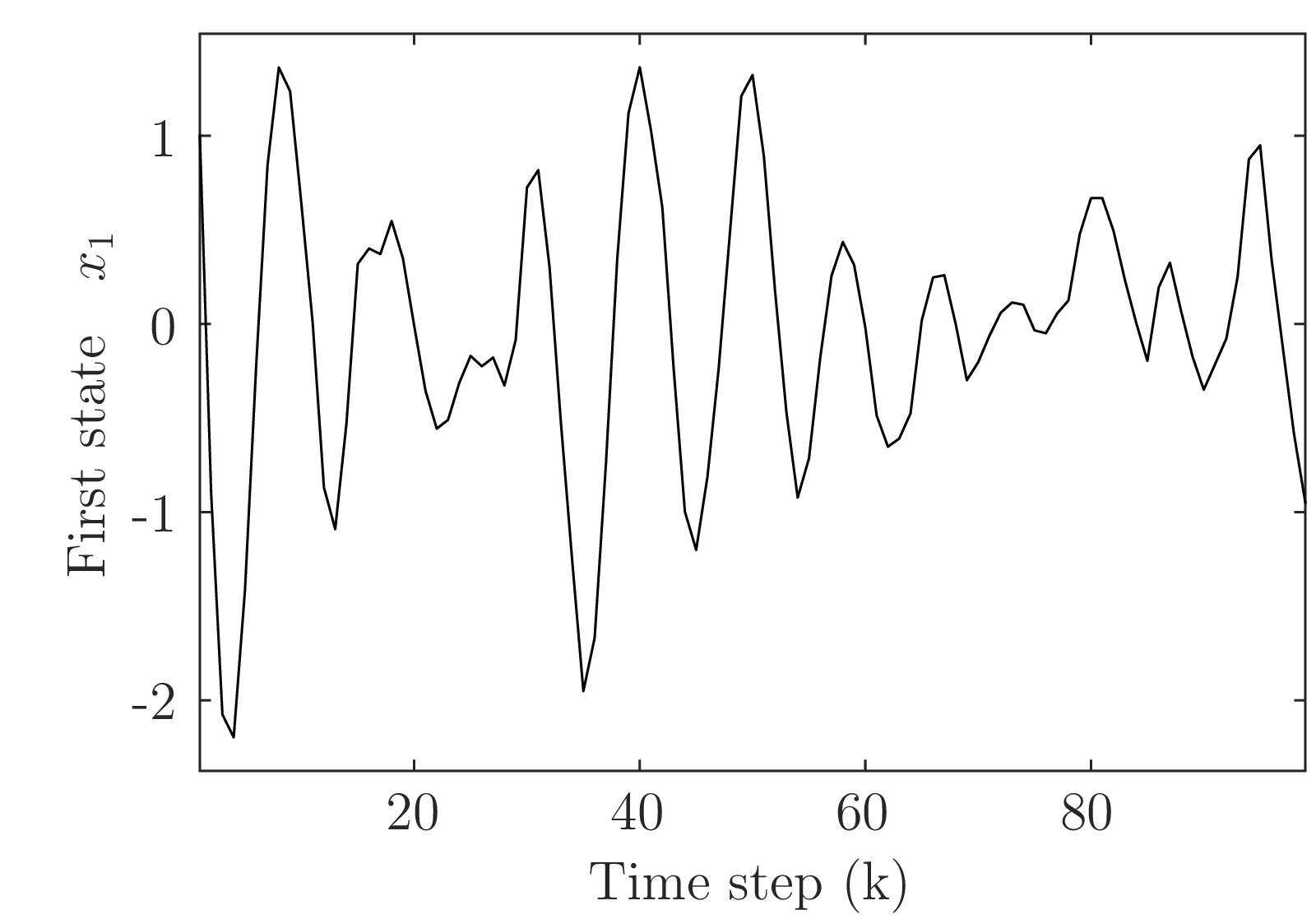}
\caption{First state $x_{1,k}$.}
\label{fig1b}
\end{subfigure}
\hfill
\begin{subfigure}[b]{2.2in}
\centering
\includegraphics[width=2.2in]{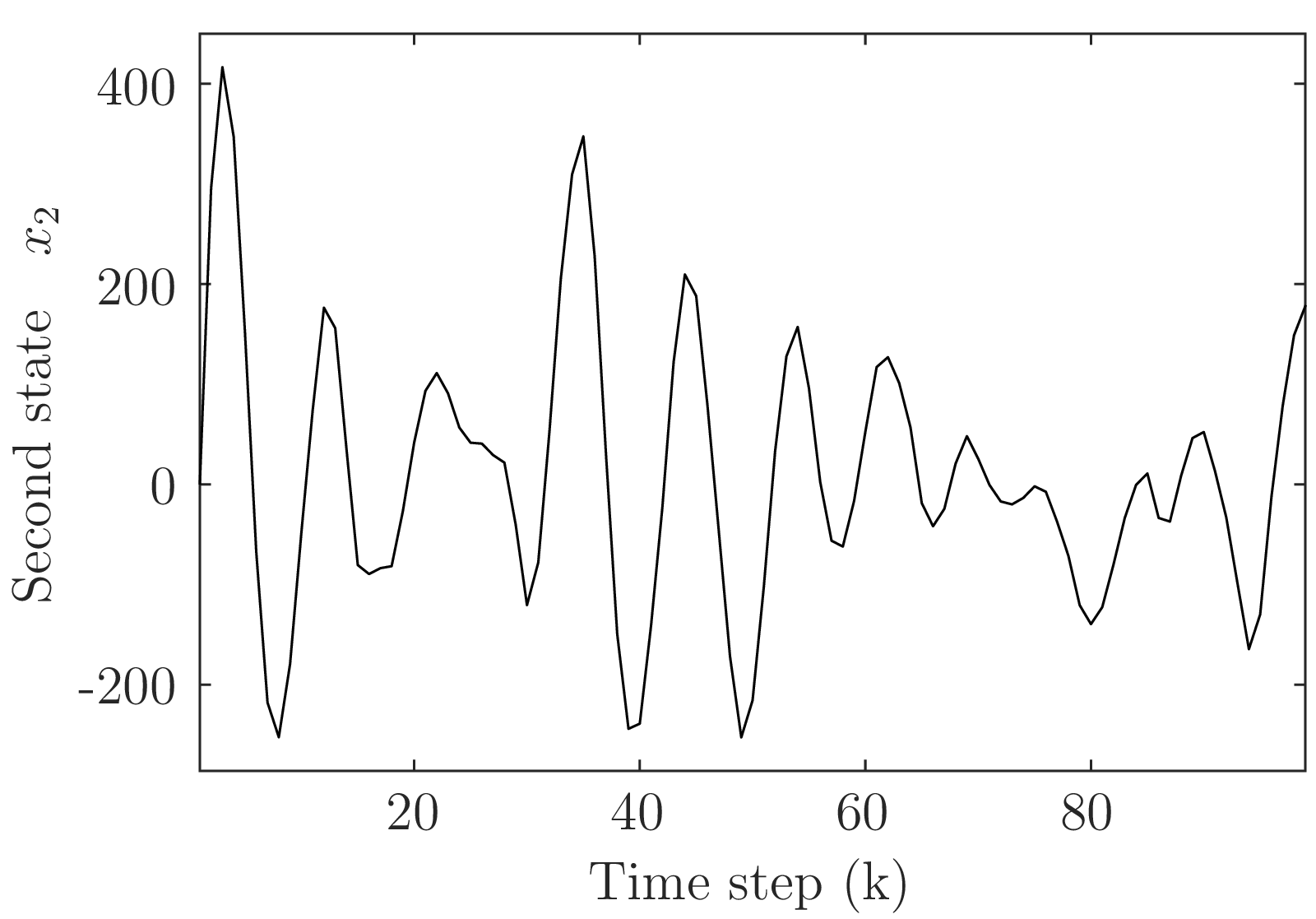}
\caption{Second state $x_{2,k}$.}
\label{fig1c}
\end{subfigure}
\par\vspace{12pt}
\begin{subfigure}[b]{2.2in}
\centering
\includegraphics[width=2.2in]{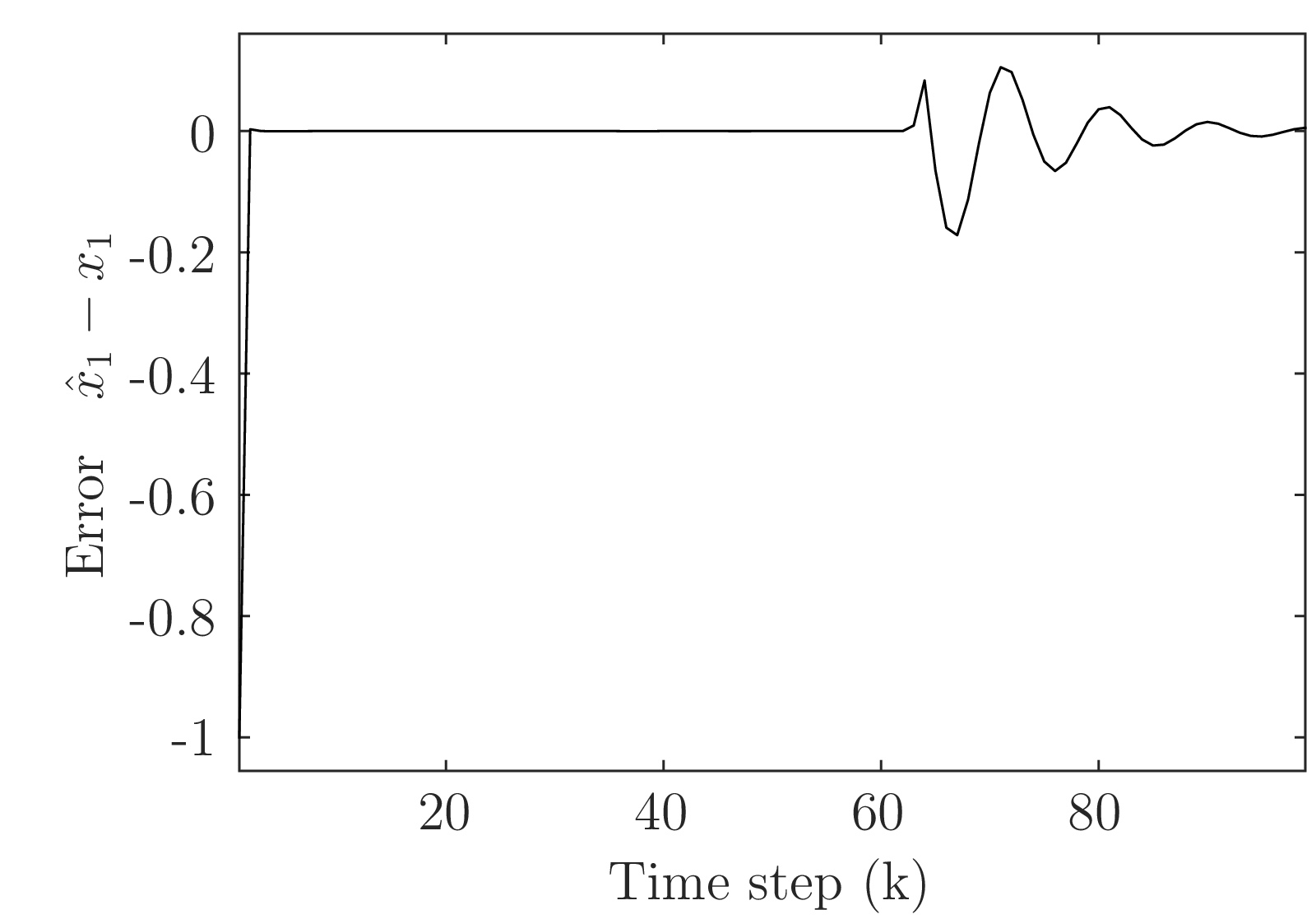}
\caption{Estimation error $x_{1,k}-\widehat{x}_{1,k}$.}\label{fig1d}
\end{subfigure}
\hfill
\begin{subfigure}[b]{2.2in}
\centering
\includegraphics[width=2.2in]{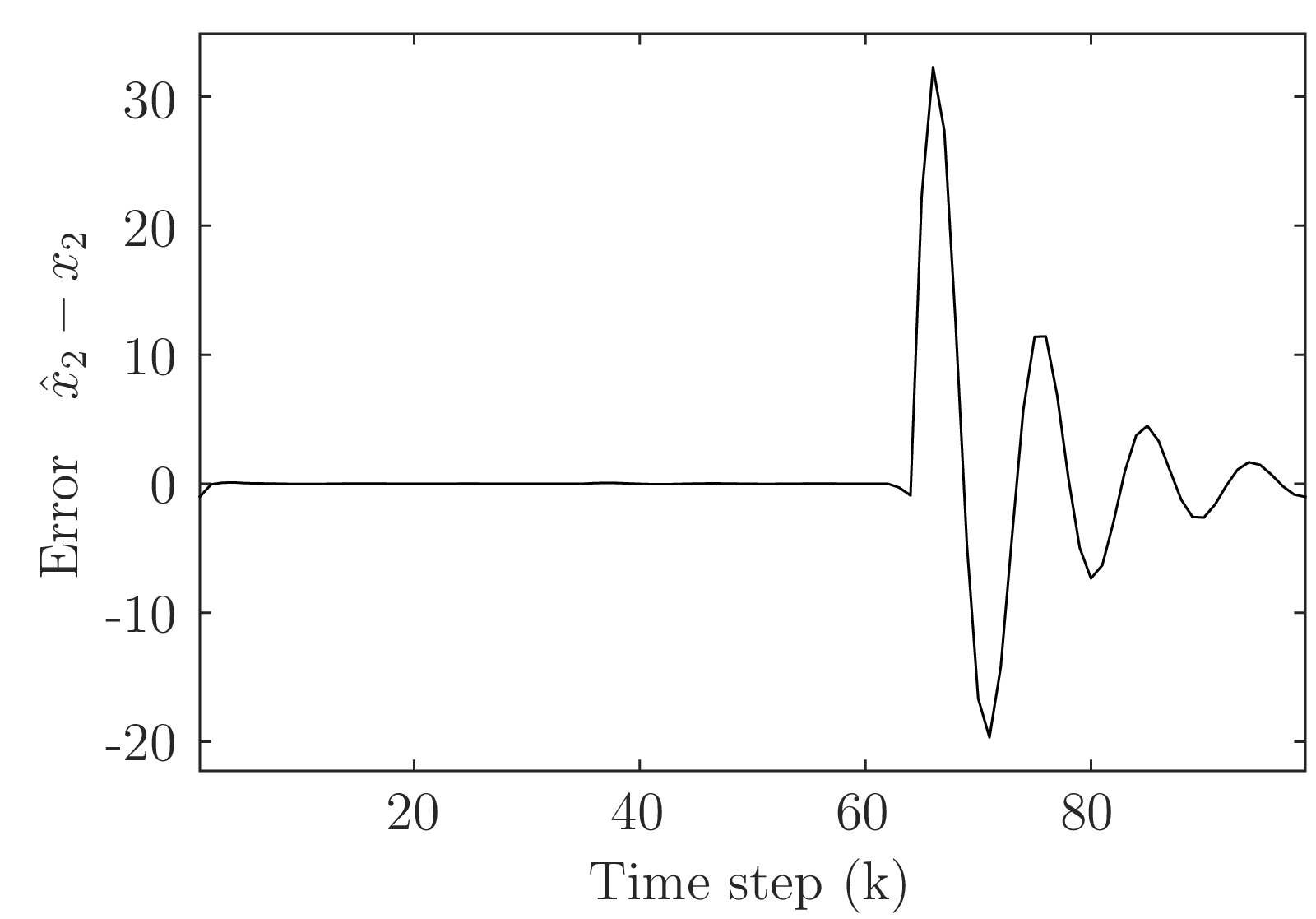}
\caption{Estimation error $x_{2}(k)-\widehat{x}_{2,k}$.}\label{fig1e}
\end{subfigure}
\caption{Simulation results for Algorithm 1.}
\label{fig1}
\end{figure}

\begin{figure}
\centering
\begin{subfigure}[b]{\textwidth}
\centering
\hfill\includegraphics[width=0.98\textwidth]{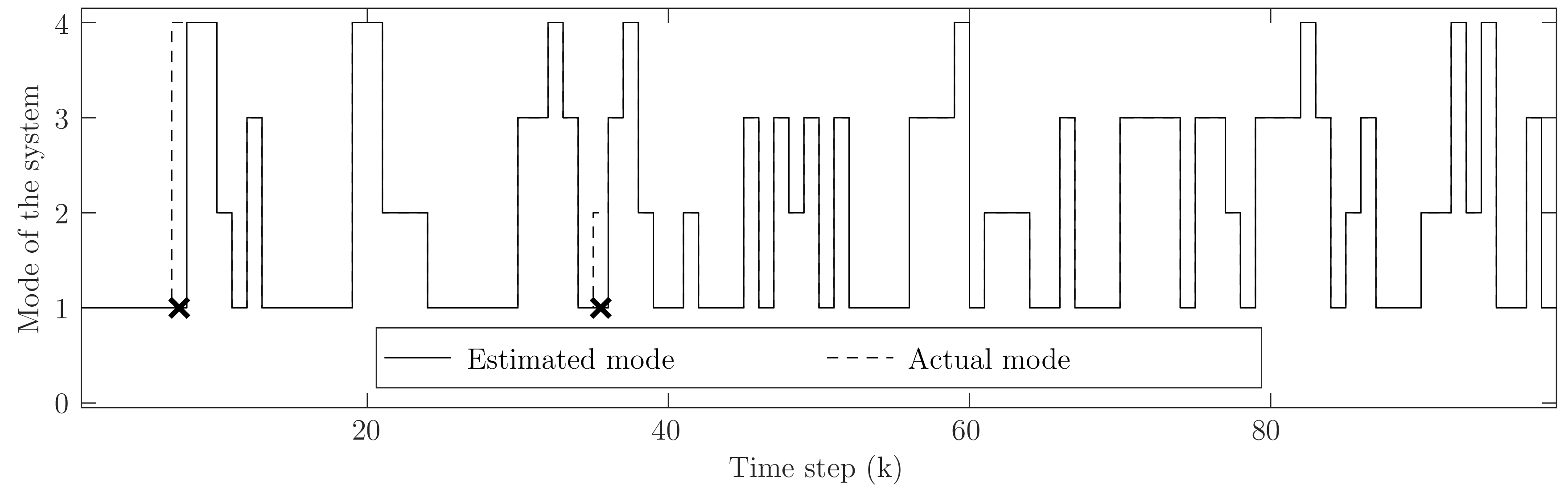}
\caption{The actual mode of system and its estimation using Algorithm 2.}
\label{fig2a}
\end{subfigure}
\par\vspace{12pt}
\begin{subfigure}[b]{2.2in}
\centering
\includegraphics[width=2.2in]{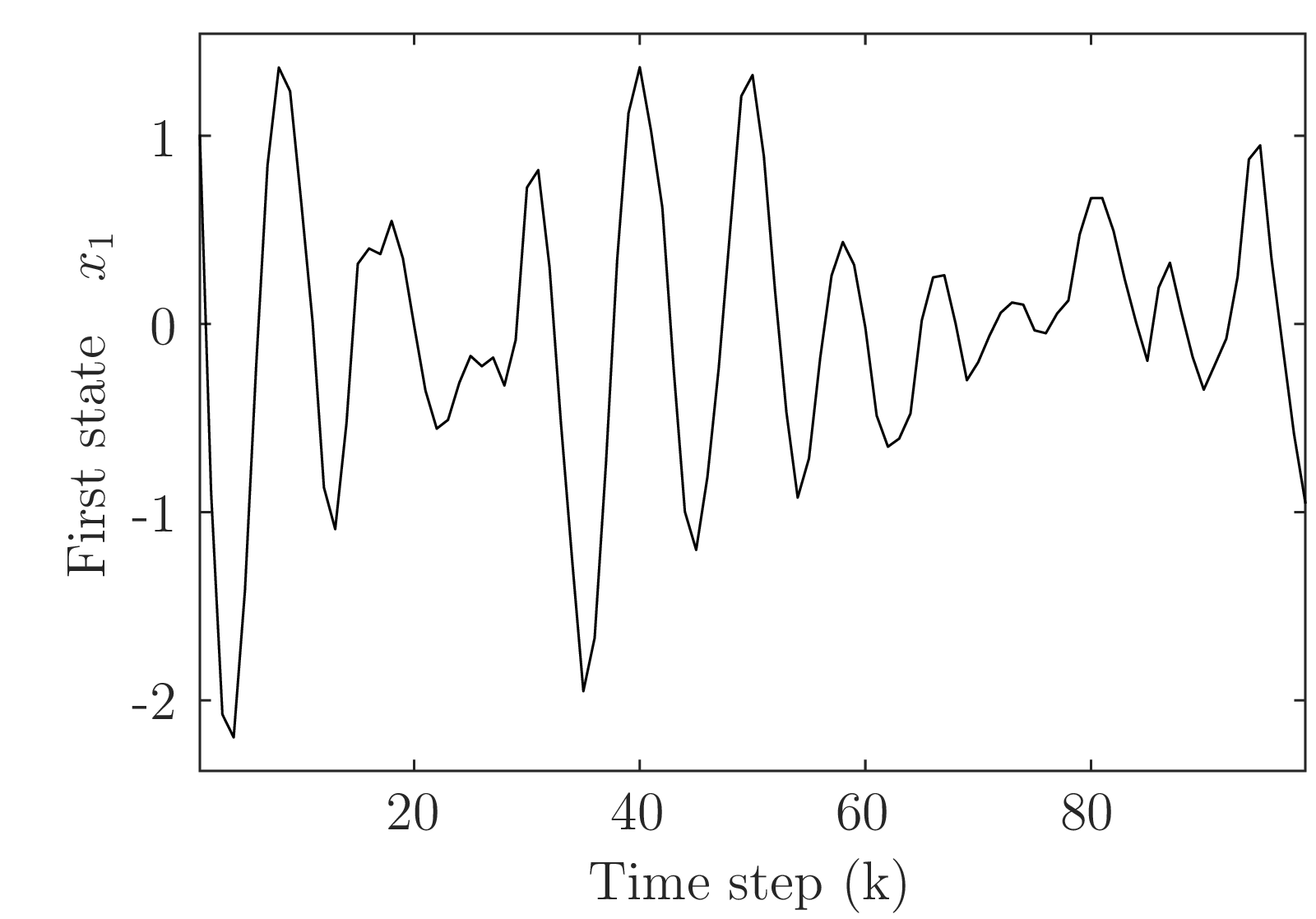}
\caption{First state $x_{1,k}$.}
\label{fig2b}
\end{subfigure}
\hfill
\begin{subfigure}[b]{2.2in}
\centering
\includegraphics[width=2.2in]{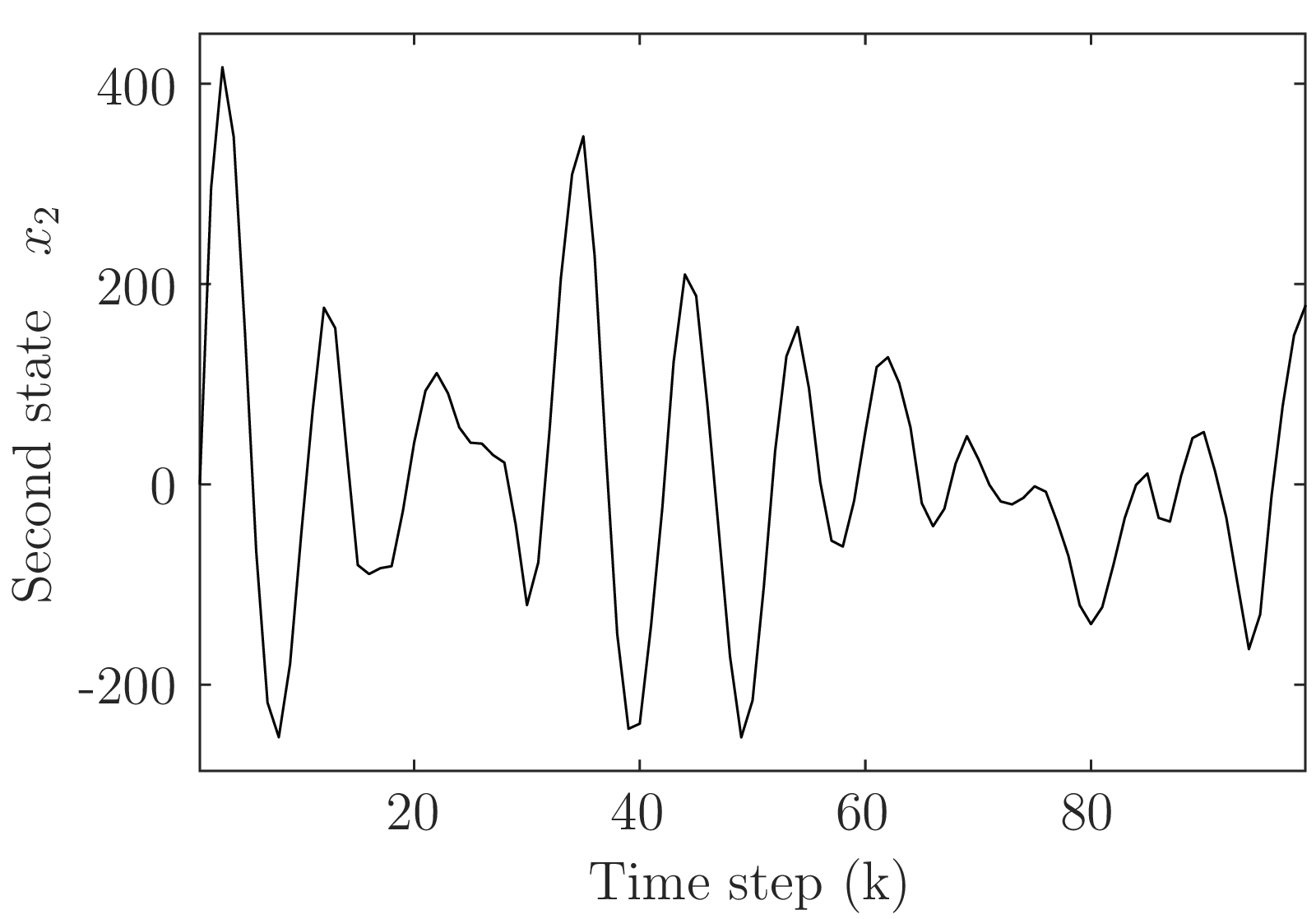}
\caption{Second state $x_{2,k}$.}
\label{fig2c}
\end{subfigure}
\par\vspace{12pt}
\begin{subfigure}[b]{2.2in}
\centering
\includegraphics[width=2.2in]{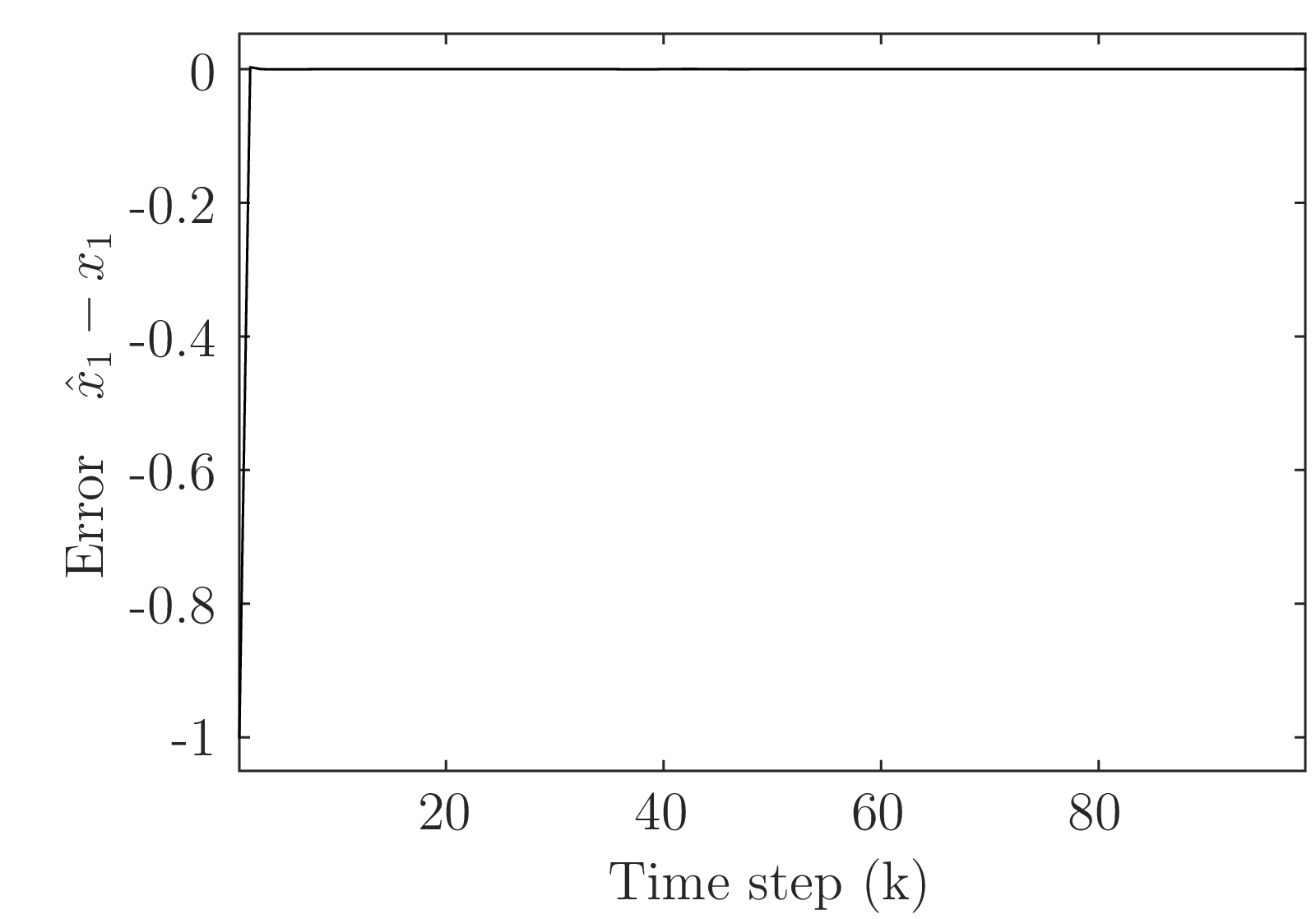}
\caption{Estimation error $x_{1,k}-\widehat{x}_{1,k}$.}\label{fig2d}
\end{subfigure}
\hfill
\begin{subfigure}[b]{2.2in}
\centering
\includegraphics[width=2.2in]{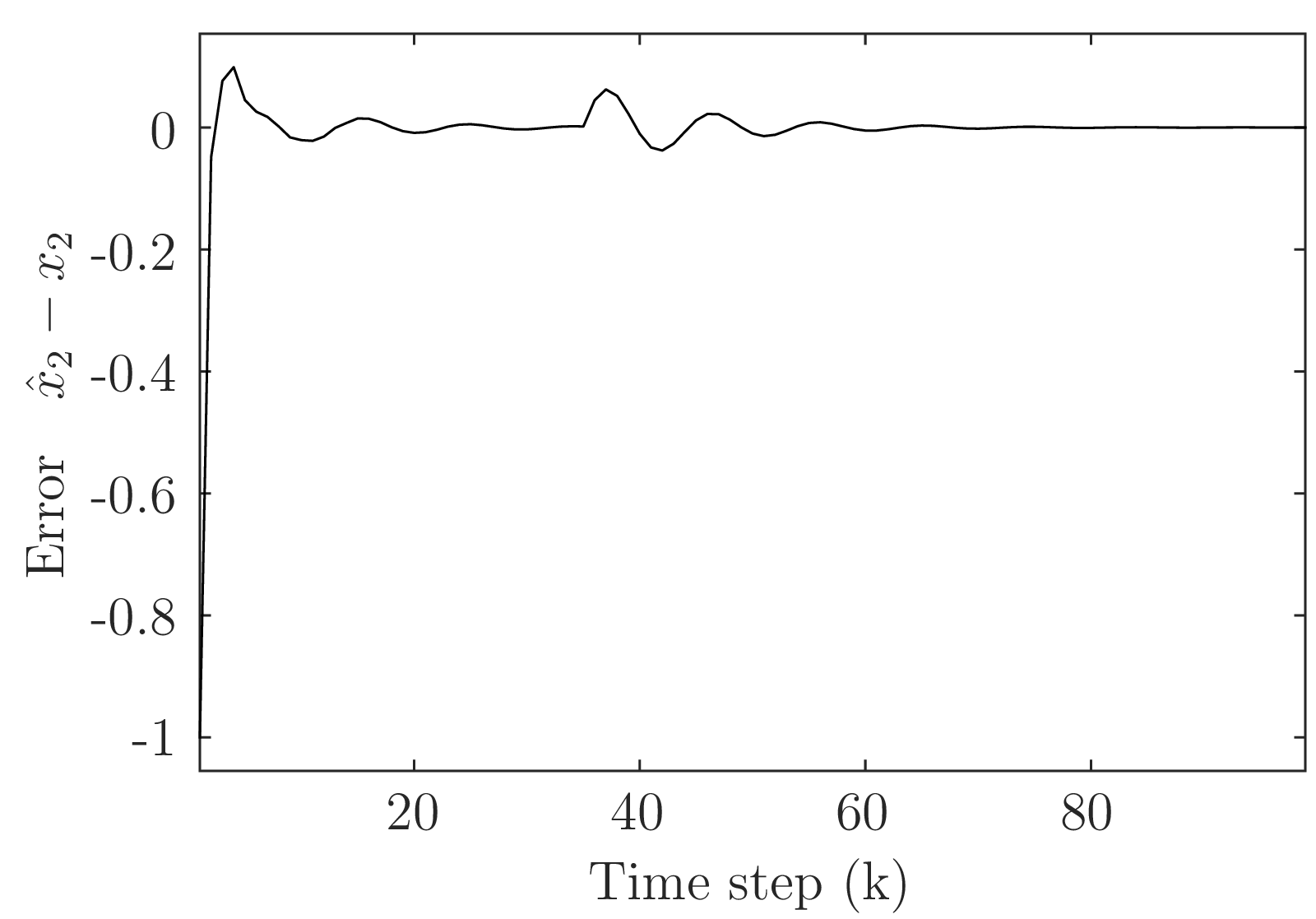}
\caption{Estimation error $x_{2,k}-\widehat{x}_{2,k}$.}\label{fig2e}
\end{subfigure}
\caption{Simulation results for Algorithm 2.}
\label{fig2}
\end{figure}

As mentioned in the Remark~\ref{ccmm}, it is possible to apply the multiple model estimation methods for simultaneous estimation of the mode $\theta_k$ and the state $x_k$. For this purpose, $\pmf(\theta_{k-1}=j \mid Y_{k})$ in \eqref{eq:123} is replaced with the model probabilities in the multiple model estimation methods at each time step. The results obtained by apply the interacting multiple model estimation method (IMM) as described in \cite{Li2005} are plotted in Fig.~\ref{fig3} which has the same format as the previous two figures.
According to the Figs. \ref{fig1} through \ref{fig3}, both Algorithm~1 and Algorithm~2 show acceptable performance compared to the IMM method which has a higher computational load due to running multiple Kalman filters in parallel.  
It is noticeable that there are time steps around which the mode estimation errors occur in all of the three methods. The reason is large noise amplitudes near these time steps.
 
\begin{figure}
\centering
\begin{subfigure}[b]{\textwidth}
\centering
\hfill\includegraphics[width=0.98\textwidth]{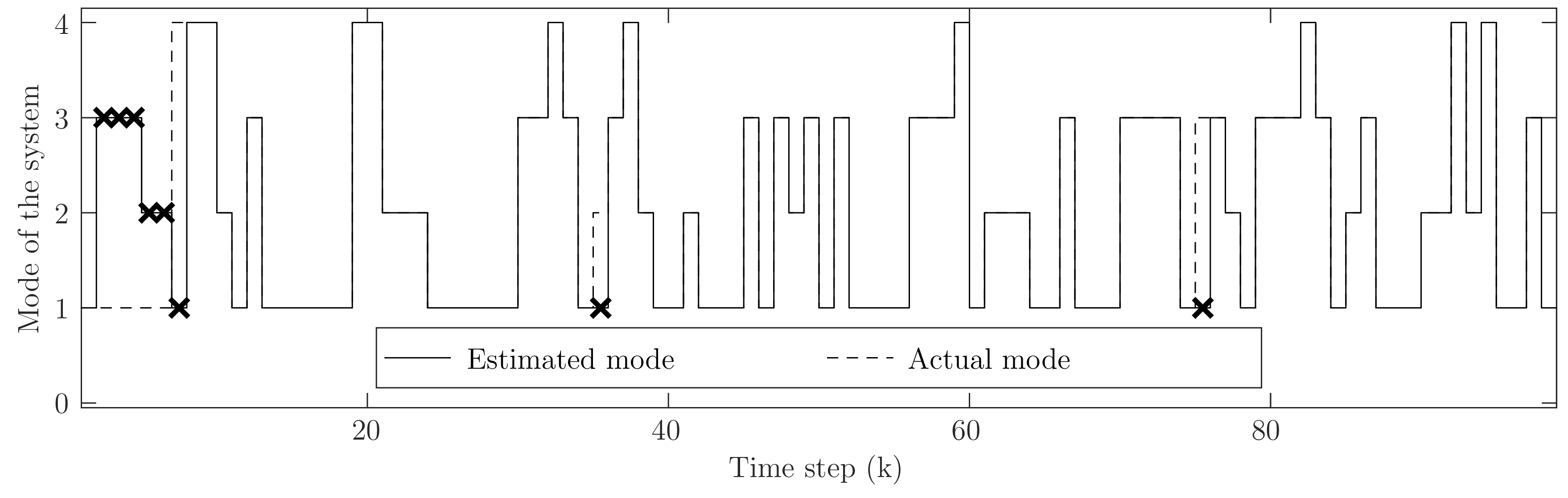}
\caption{The actual mode of system and its estimation using the IMM Algorithm.}
\label{fig3a}
\end{subfigure}
\par\vspace{12pt}
\begin{subfigure}[b]{2.2in}
\centering
\includegraphics[width=2.2in]{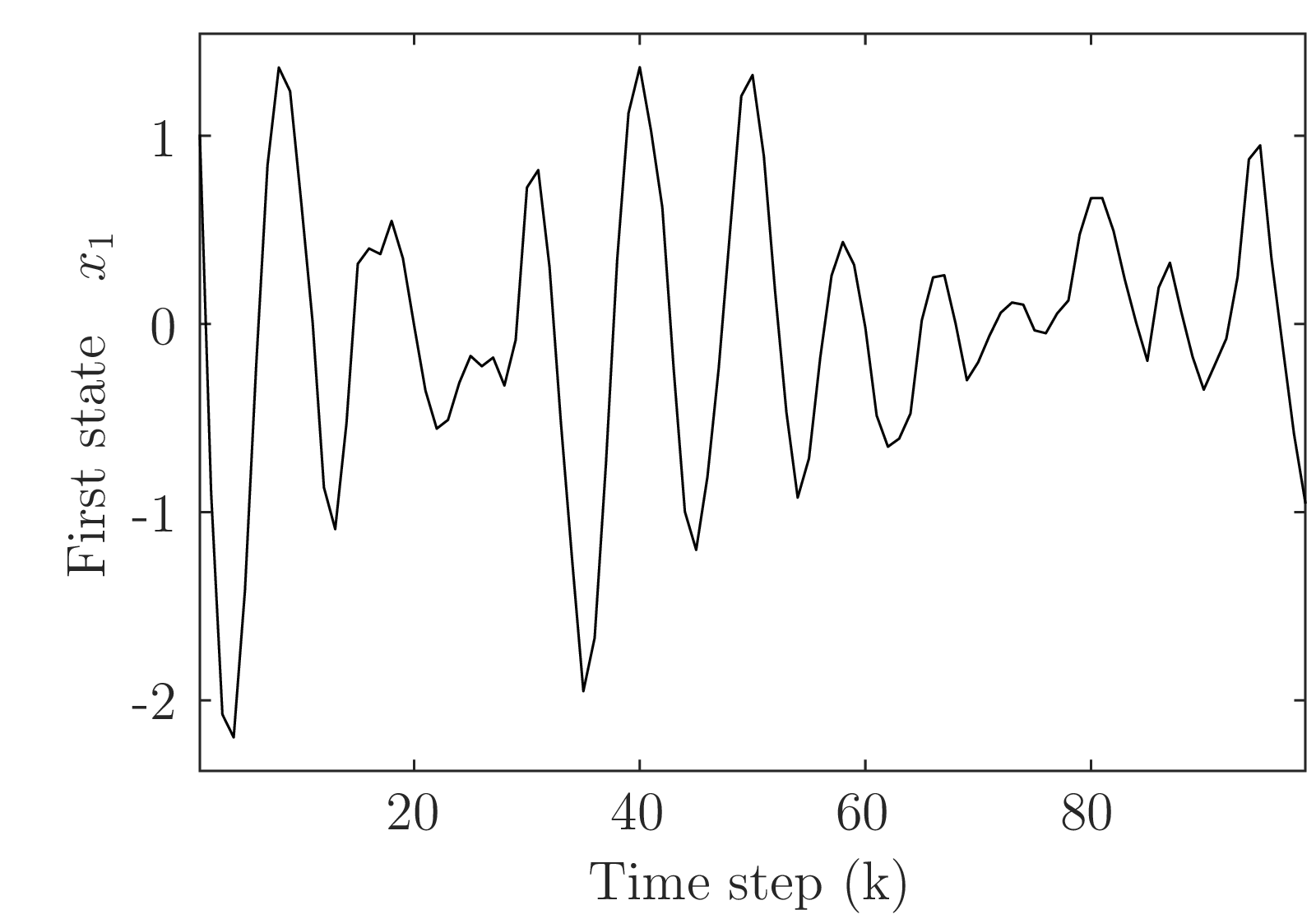}
\caption{First state $x_{1,k}$.}
\label{fig3b}
\end{subfigure}
\hfill
\begin{subfigure}[b]{2.2in}
\centering
\includegraphics[width=2.2in]{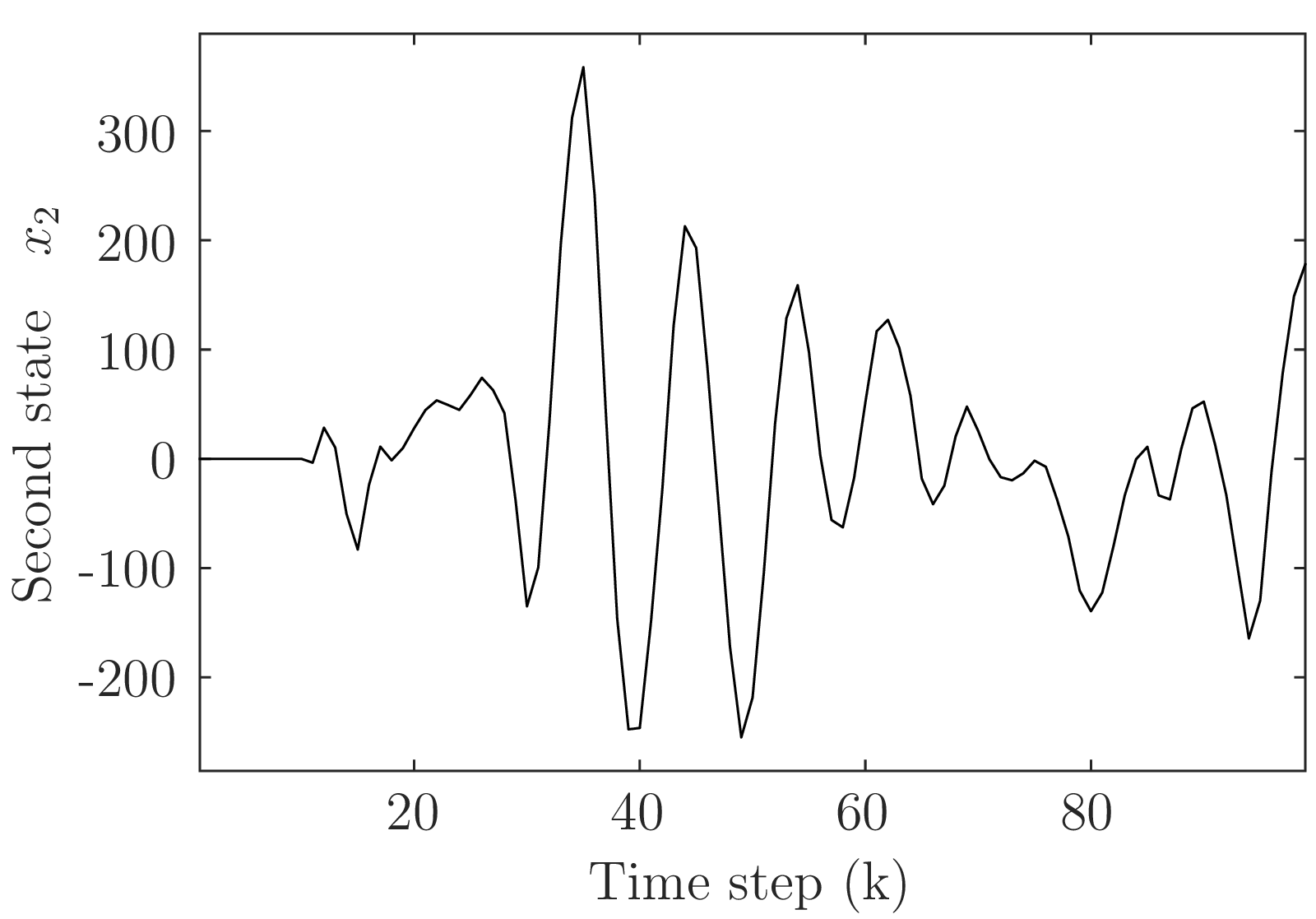}
\caption{Second state $x_{2,k}$.}
\label{fig3c}
\end{subfigure}
\par\vspace{12pt}
\begin{subfigure}[b]{2.2in}
\centering
\includegraphics[width=2.2in]{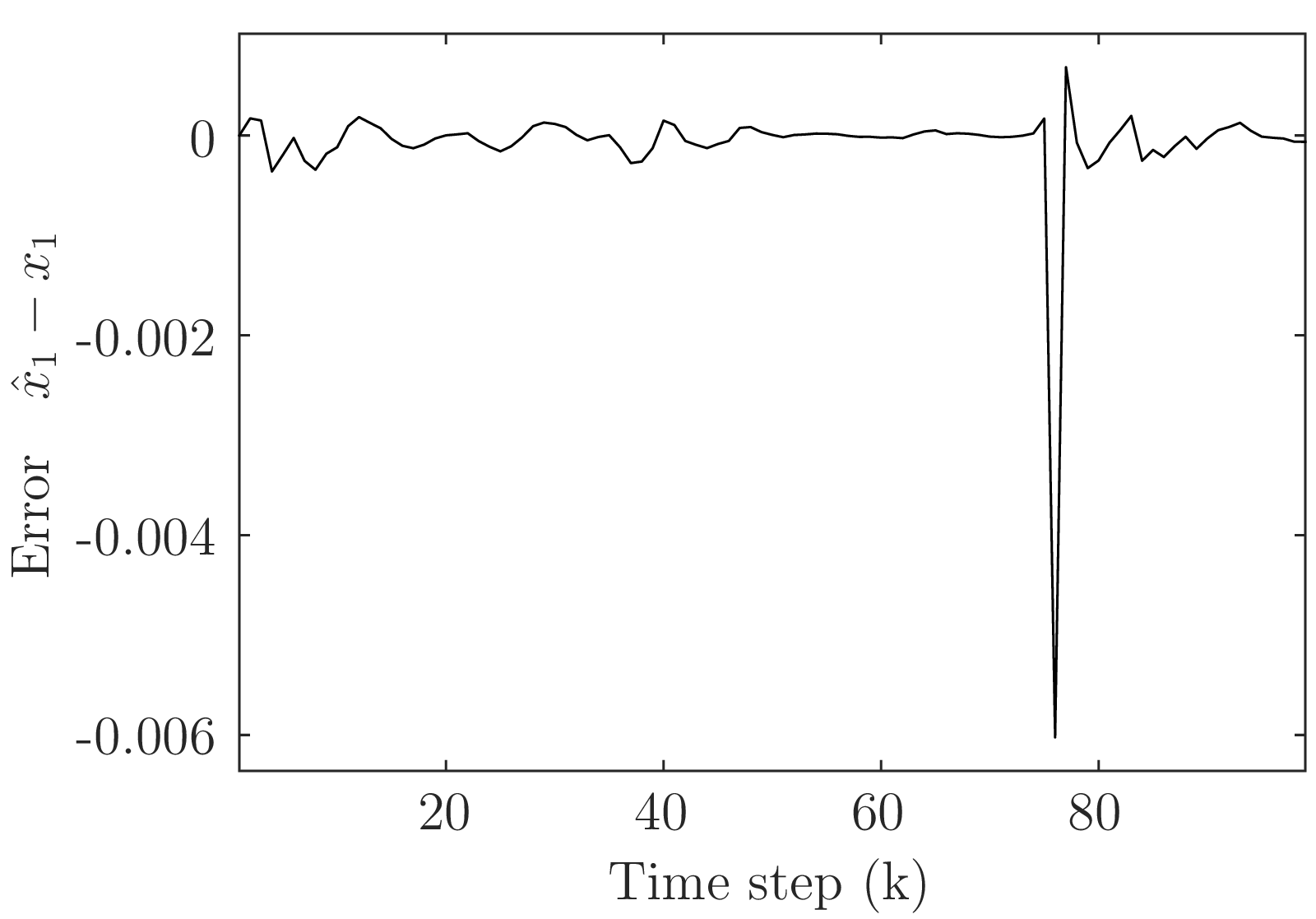}
\caption{Estimation error $x_{1,k}-\widehat{x}_{1,k}$.}\label{fig3d}
\end{subfigure}
\hfill
\begin{subfigure}[b]{2.2in}
\centering
\includegraphics[width=2.2in]{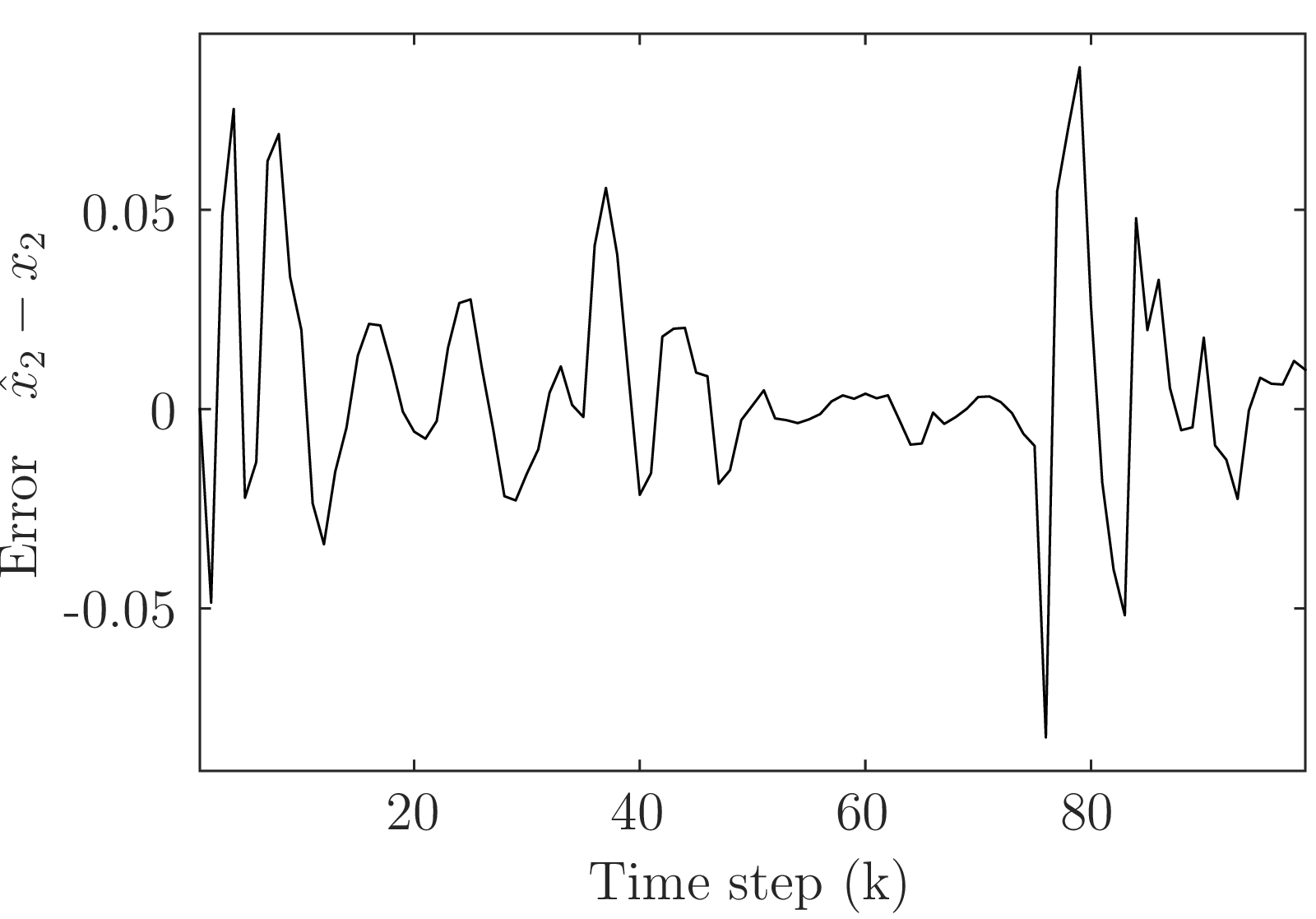}
\caption{Estimation error $x_{2,k}-\widehat{x}_{2,k}$.}\label{fig3e}
\end{subfigure}
\caption{Simulation results for the IMM Algorithm.}
\label{fig3}
\end{figure}

Due to the randomness of the mode $\theta_k$ and inputs $v_k$ and $u_k$, the simulation results are not the same for simulation trials with the same conditions. Hence, it is needed to make a statistical comparison between the simulation results of the three methods in order to draw more accurate conclusions.
For this purpose, the mode detection error percentage (\%MDE) and root mean square error for the $i$th state variable (RSME$_i$) are defined as
\begin{subequations}
\begin{align}
\mathrm{\%MDE} &= \frac{100}{N} \Big({\sum}_{k=0}^N \eta_k\Big) \\[3pt]
\eta_k &= \begin{cases}1 & \mathrm{if}\quad\theta_{k}\neq \widehat{\theta}_{k}\\[-2pt] 0 & \mathrm{if}\quad\theta_{k} =\widehat{\theta}_{k}\end{cases} \\[3pt]
\mathrm{RSME}_i &= \Big[{\sum}_{k=0}^N \big(x_{k,i} - \hat x_{k,i}\big)/N\Big]^{1/2}
\end{align}
\end{subequations}
in which $N$ is the last simulation step.

Taking the average of the above measures over 100 simulation trials, the result of comparison between Algorithm~1, Algorithm~2, and the IMM algorithm is summarized in the Table~\ref{table1}. 
According to the table, Algorithm~1 has the best mode estimation performance. On the other hand, the IMM algorithm generates a much better state estimation relying on the multiplicity of Kalman filters. Considering the fact that our main objective is to estimate the mode which stands for the packet loss occurrences, it can be concluded that the Algorithm~1 is a reasonable solution for achieving this objective.

\begin{table}
\caption{Comparison of algorithms}
\label{table1}
\begin{center}
\begin{tabular}{|c|c|c|c|}
\hline
Criterion & Algorithm 1 & Algorithm 2 & IMM algorithm \\ 
\hline
E\{\%MDE\} & $6.9$ & $13.1$ & $8.2$ \\ 
\hline
E\{RSME$_1$\} & $0.11$  & $0.15$ & $0.006$ \\
\hline
E\{RSME$_2$\} & $4.3$  & $13.2$ & $0.53$ \\
\hline
\end{tabular}
\end{center}
\end{table}

To have an insight into the reason for the weaker performance of Algorithm~2 according to the Table~\ref{table1}, the histograms of the \%MDE values among the 100 simulation trials for each of the algorithms are plotted in the Fig.~\ref{fig3}. The polts show that Algorithm~2 performs better than the IMM algorithm in many of the cases. But, there are a few cases in which the estimation based on Algorithm~2 shows a very poor performance. What happens in these cases is that it takes a large number of steps for the estimator to recover from an estimation error which results in a large number of successive mode estimation errors.

\begin{figure}
\centering
\includegraphics[width=3in]{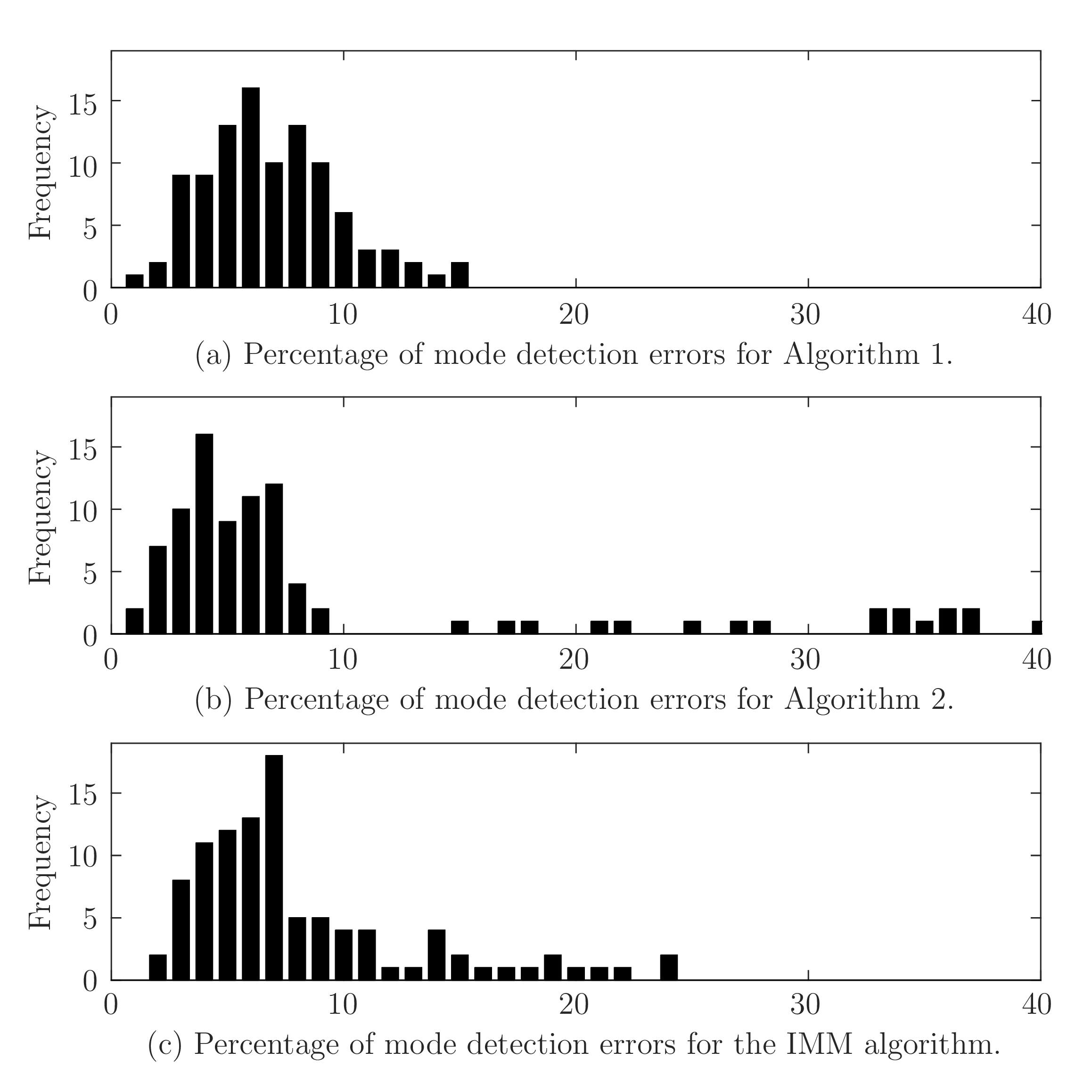}
\caption{Statistical comparison of mode detection errors for 100 simulation trials: (a) Algorithm~1, (b) Algorithm~2, (c) IMM algorithm.}
\label{fig4}
\end{figure}

\section{Conclusion} 
In this paper, two algorithms have been proposed for estimating the occurrence of packet losses represented as the mode variable of a Markovian jump system. Both of the algorithms can be used in conjunction with a single Kalman filter for simultaneous estimation of state and packet loss occurrence. 
The first algorithm is based on an input-output model of the system and is capable of being executed independently of a Kalman filter for estimation of only the packet loss occurrences. 
The second algorithm is based on the state space form and includes a Kalman filter as a component. 
Both of the algorithms have been applied to a reactor system during an example. 
It was shown that the existing multiple model estimation methods can be also applied to the simultaneous estimation problem, although there is the disadvantage that they require multiple Kalman filters. 
The performances of the proposed algorithms and the interacting multiple model estimation method (IMM) have been verified and compared through simulations. Statistical analysis of the results shows that the first algorithm has a better estimation performance for packet loss occurrences and the IMM method generates a better state estimation. Derivation of conditions for stability and boundedness of the error covariance matrix for the proposed algorithms and making improvements to the performance of the second algorithm can be considered as directions for the future research.

\section*{References}

\end{document}